\definecolor{webgreen}{rgb}{0,0.4,0}
\definecolor{webbrown}{rgb}{0.6,0,0}
\definecolor{purple}{rgb}{0.5,0,0.25}
\definecolor{darkblue}{rgb}{0,0,0.7}
\definecolor{darkred}{rgb}{0.7,0,0}
\definecolor{darkgreen}{rgb}{0,0.7,0}
\begin{document}
\begin{spacing}{1.3}

\newtheorem{definition}{Definition}[]
\newtheorem{question}{Question}[]
\newtheorem{observation}{Observation}[]
\newtheorem{conjecture}{Conjecture}[]
\newtheorem{claim}{Claim}[]
\newtheorem{lemma}{Lemma}[]
\newtheorem{proposition}{Proposition}[]
\newtheorem{theorem}{Theorem}[]
\newcounter{example}[section]
\newenvironment{example}[1][]{\refstepcounter{example}\par\medskip
   \noindent \textbf{Example~\theexample. #1} \rmfamily}{\medskip}

\title{Single-peaked domains with designer uncertainty \footnote{Guidance by Arunava Sen and Debasis Mishra is gratefully acknowledged. I am also thankful to Stephen Morris and Alex Wolitzky for insightful comments.}}

\author{Aroon Narayanan \footnote{Department of Economics, MIT}}

\date{}

\maketitle

\begin{abstract}
This paper studies single-peaked domains where the designer is uncertain about the underlying alignment according to which the domain is single-peaked. The underlying alignment is common knowledge amongst agents, but preferences are private knowledge. Thus, the state of the world has both a public and private element, with the designer uninformed of both. I first posit a relevant solution concept called implementation in mixed information equilibria, which requires Nash implementation in the public information and dominant strategy implementation in the private information given the public information. I then identify necessary and sufficient conditions for social rules to be implementable. The characterization is used to identify unanimous and anonymous implementable social rules for various belief structures of the designer, which basically boils down to picking the right rules from the large class of median rules identified by Moulin (1980), and hence this result can be seen as identifying which median rules are robust to designer uncertainty.\\

    \noindent Keywords: social choice function, single-peakedness, implementation, designer uncertainty \\

    \noindent JEL Classification: D71
\end{abstract}

\newpage

\section{Introduction}

Consider the following situation: a group of people, say voters, need to select an alternative, say a political party, amongst many that are available to them. An independent authority is established to determine what the selection rule should be i.e. what alternative to select based on the preferences of each person in the group. In many such situations, it is reasonable to assume that the agents, probably after much discussion amongst themselves, have a common concern, such as national defence. They would each have some ideal amount of national defense spend in mind, which would correspond to their ``peak", and they would find any amount farther away from this peak less attractive than amounts that are closer to it. This captures their preference over the alternatives. In evaluating the parties based on their position on how much to spend on national defense, the voters thus find themselves having ``single-peaked preferences" over them. This basically means that if a party proposes a spend closer to a voter's peak, the voter will like it more.

First introduced by \cite{black}, single-peaked domains were studied extensively by \cite{Moulin1980}. Since then, the literature has studied them in the context of some properties of rules that seem quite natural: first, everybody should find it optimal to report their true preference regardless of what the others say they prefer, second, if everyone agrees on an alternative, then that should be chosen, and third, every agent should be ex-ante equal. These are the properties of strategy-proofness, unanimity, and anonymity, respectively. A nice class of rules - the median rules - satisfy these conditions, and in fact only they satisfy these conditions. Essentially such rules pick the median of the $n$ reported peaks and $n-1$ fixed ``phantom peaks", using an exogenous underlying alignment according to which the preferences of the agents are known to be single-peaked. Each median rule can be uniquely identified by these $n-1$ ``phantom peaks". \cite{Moulin1980} also showed that an even more general characterization of  rules - which \cite{sprumont1995} terms min-max rules - can be obtained if one drops anonymity.

However, in this analysis it was important to assume that the underlying alignment according to which everyone evaluates the alternatives is known to the authority. In many situations, this may not be reasonable since the authority may not be privy to such information. For example, when the legislature sets up a committee to decide funding for a program, it is quite likely that the committee members will arrive at a common concern after discussions, and then vote for the optimal funding based on that concern. But it is unlikely that the legislature would know the specific concern beforehand. It would then be prudent to set up a rule that accounts for this uncertainty. It may even be the case that priorities that produce such alignments amongst voters changes drastically over time. Witness the remarkable realignment in the United States Democratic Party's position on race, from being anti-abolitionist in the 19th century to passing the Voting Rights Act in the 20th century. It is reasonable to expect that long-lasting rules be designed keeping such shifting alignments in mind. Applications also extend to similar settings in auctions, regulation and a variety of other domains.

The objective of this paper is to identify social rules that will be robust to the designer's uncertainty about the underlying ordering. In particular, we would like to relate this to Moulin's class of median rules. Given $m$ alternatives and $n$ players, his median class contains ${m+n-2}\choose{n-1}$ rules for each possible underlying alignment. Clearly a designer then faces a problem of plenty, and we would like our results to help her in identifying which of these will have this additional property i.e. the property of being robust to uncertainty about the underlying ordering.

In order to achieve this objective, it is imperative to first formalize what this robustness notion should be. Note that the information of each agent can be split into two - a public state, which is the underlying alignment, and a private state, which is his own preference. The state of the world then is composed of an alignment that is in the support of the designer's belief and a profile of preferences that is single-peaked according to this alignment, and any SCF picks an alternative for each state. Given the mixed information nature of our setting, we posit a suitable solution concept for implementation of SCFs. With respect to the underlying alignment, which is common knowledge amongst the agents, we require Nash implementation. Then, given that the underlying alignment is being truthfully revealed, we require implementation in dominant strategies with respect to the private information i.e. their preferences.

Of course, applying our solution concept directly to rules is cumbersome, and hence first we must obtain a characterization of this concept in terms of easy-to-check conditions on social rules. By the first requirement, implementable SCFs will be strategy proof for states that share the same alignment. By the second, they will have to satisfy a condition that we called shared-monotonicity, which requires that across two states that share a preference profile, the SCF must choose the same outcome at that profile. This is due to the lower contour sets of agents matching exactly for agents with such preferences across the two states, which leads to Nash equilibria in one state being Nash equilibria in the other. These conditions, with no veto power, also turn out to be sufficient.

This characterization enables us to carry out our original objective - identify robust unanimous and anonymous SCFs. In fact, this objective had a greater role in inspiring the kind of solution concept that we use than may be apparent from the deductive narrative we have laid out so far. The axioms characterizing the solution are such that adding unanimity and anonymity implies that implementable SCFs, fixing an alignment from the set of all alignments that the designer believes plausible, must be Moulin's median rules. We can then check which median rules satisfy the additional conditions imposed by implementability for some interesting belief structures of the designer. This then enables us to cleanly answer some of the question of robustness within Moulin's class that inspired us. In fact, this will form the bulk of our study. The most important of these is the case when the designer believes that the preferences may have been generated by any possible alignment - in this case, only the true median rule, i.e. the rule that picks the median of the reported peaks of the agents, survives the demanding notion of implementability. We also derive results for other interesting forms of designer uncertainty.

We are not aware of any paper that studies settings in which the state of the world has both public and private elements. In that sense, our setting is novel. There are papers that look at domains which feature multiple single-peaked domains devoid of any asymmetry in information between the designer and the agents about the alignments, such \cite{REFFGEN2015349}. Another related strand of literature studies single-peaked preferences on multi-dimensional domains, as in \cite{barb}, \cite{border}, and \cite{chich}.

\section{Model}

The environment is the tuple $(N, X, \mathcal{P}, \mu)$ where $N := \{1, . . . , n\}$ is a set of $n$ agents, $X$ is a finite set of alternatives, $\mathcal{P}$ is the set of all strict alignments over $X$, and $\mu$ is a non-degenerate probability distribution over $\mathcal{P}$. We assume $n \geq 3$.

The agents in this setting have single-peaked preferences according to some alignment in the support of $\mu$ \footnote{Note that any alignment and its exact reverse, for example $a \succ b \succ c$ and $c \succ b \succ a$, produce the same single-peaked domain, and hence are equivalent for our purposes. All claims pertain to this equivalence class.}, and this alignment is common knowledge amongst them. $\mu$ represents the designer's belief over which possible alignment could be underlying the agents' preferences. The preferences of agents are private knowledge, i.e. unknown to both other agents and the designer.  

Throughout this study, we will attempt to derive results that do not depend heavily on the specific prior the designer has. This will be achieved by working with only the support of the prior, so that our results will be robust to small changes in the probability assigned to the alignments in the support of the prior.

We assume wlog that the message space of each agent $i$ in any mechanism must be of the form $M_i = L_i \times \mathcal{K}$ where $L_i$ is allowed to be any arbitrary space, and $\mathcal{K} \subseteq \mathcal{P} \times \mathcal{P}$ contains elements of the form $(\succ, P_i)$ where $P_i$ is single-peaked according to $\succ$.

\begin{definition}
A \textbf{mechanism} is the tuple $(M,g)$ where $M = (M_i)_{i \in N}$ is the message space and $g:M \to X$ assigns an outcome to each message profile. 
\end{definition}

Let $\Theta$ denote the set of all states. Then, for any state $\theta \in \Theta$, we can represent it as $\theta = (\succ^{\theta}, (P^{\theta}_{i})_{i \in N})$, where $\succ^{\theta}$ is an alignment in the support of $\mu$ and represents the public state while $P^{\theta}_{i}$ represents the private state of agent $i$. The tuple $(P^{\theta}_{i})_{i \in N}$ identifies a profile of preferences that is single-peaked according to the public state.

\begin{definition}
A \textbf{social choice function} $f:\Theta \to X$ assigns an alternative to each state $\theta$.
\end{definition}

Finally, we formally define the solution concept that we set forth in the introduction. The idea is simple and intuitive - we require Nash implementation in the public component and dominant strategy implementation in the private component of the state given truthful reporting of the public component. Essentially, Nash implementation in the public component ensures nobody has an incentive to unilaterally misreport the common ordering. Then, given the common ordering is being reported truthfully, dominant strategy implementation ensures that the private preferences are also reported truthfully.

\begin{definition}
A mechanism $(M,g)$ implements a social choice function $f$ in \textbf{mixed information equilibria} if for every state $\theta = (\succ^{\theta}, (P^{\theta}_{i})_{i \in N})$ there exists a message $m^* = (l^{*}_i, \succ^{\theta}, (P^{\theta}_{i})_{i \in N})$ such that $g(m^*) = f(\theta)$ and we have:
\begin{enumerate}

    \item Nash implementation in public information:
    
    \begin{itemize}
        \item for all $i$, $$g(m^{*}_{i}, (l_{-i}, \succ^{\theta}, P^{\theta}_{-i})) \ R^{\theta}_i \ g((l_{i}, \succ^{'}, P^{'}_{i}), (l_{-i}, \succ^{\theta}, P^{'}_{-i}))$$ for all $(l_i)_{i \in N}$, for all $(\succ^{'},P^{'}_{i}) \in \mathcal{K}$ 
        \item if $\Bar{m}$ is a Nash equilibrium and $\Bar{m}_i = (\Bar{l}_i, \Bar{\succ}_i, P^{\theta}_{i})$ for some $(\Bar{l}_i, \Bar{\succ}_i)$, then $g(\Bar{m}) = g(m^*)$
    \end{itemize}

    \item Dominant strategy implementation in private information given truthful public state reporting: for all $i$, $$g(m^{*}_{i}, (l_{-i}, \succ^{\theta}, P_{-i}^{'})) \ R^{\theta}_i \ g((l_{i}, \succ^{\theta}, P_{i}^{'}), (l_{-i}, \succ^{\theta}, P_{-i}^{'}))$$ for all $(l_i)_{i \in N}$, for all $P_{i}^{'}$, for all $P_{-i}^{'}$, with strict preference for some $(l_{-i}, \succ^{\theta}, P_{-i}^{'})$

\end{enumerate}

A social choice function $f$ is said to be \textbf{implementable} in mixed information equilibria if there exists a mechanism that implements it.

\end{definition}

To further expand on this definition, the first condition of Nash implementation requires that agents not be able to unilaterally deviate along both private and public dimensions while the second ensures that the equilibrium message is the unique Nash equilibrium with respect to the public state. This ensures that the public state is conveyed accurately to the designer. Then, the second condition says that given this public state, we must have dominant strategy implementation in the private state.

\section{Implementable choice functions}

The primary motive of this study is to understand the nature of choice functions which can be implemented by the designer in this context. Clearly the definition above of implementability is onerous to check in practice. The obvious first step then would be to arrive at a characterization in terms of some properties that can be directly verified for choice functions, and this is indeed the step we take. Then, we can apply these properties directly to the choice functions to identify which of them satisfy these properties.

Our first result will be to identify necessary and sufficient conditions for SCFs to be implementable in mixed information equilibria. In order to state the result, a few notions will be required.

Denote by $f_{\succ}$ the SCF $f$ restricted to $\{\theta = (\succ, P)\}$ where $P$ is allowed to be any profile of preferences that is single-peaked according to $\succ$.

\begin{definition}
The SCF $f_{\succ}$ is \textbf{strategy-proof} if for every preference $P_i$ that is single-peaked according to $\succ$, either $f_{\succ}(P_i,P_{-i})$ $P_i$ $f_{\succ}(P_i^{'}, P_{-i})$ or $f_{\succ}(P_i,P_{-i}) = f_{\succ}(P_i^{'}, P_{-i})$ for all deviating preference reports $P_i^{'}$ and for all preferences of the other agents $(P_{-i})$.
\end{definition}

\begin{definition}
The SCF $f$ is \textbf{shared-monotonic} if for all pairs of alignments $\succ, \succ^{'}$ such that there exists a profile of preferences $P$ that is single-peaked according to both $\succ$ and $\succ^{'}$, $f(\succ, P) = f(\succ^{'}, P)$.
\end{definition}

\begin{definition}
The SCF $f$ satisfies \textbf{no veto power (NVP)} if for all $\theta = (\succ, P)$ such that $P_i(1) = a$ for all but one agent, then $f(\theta) = a$.
\end{definition}

Before we present the theorem, let us try to intuitively understand why these conditions are relevant to the notion of implementability. Consider any implementable SCF. By the dominant strategy requirement of implementation, we must have strategy-proofness of $f_\succ$ for all $\succ$ in the support of the designer's belief. Hence the necessity of strategy-proofness is obvious. Moreover, if a message is a Nash equilibrium at a state $\theta$, it will continue to be a Nash equilibrium for states which share the profile of preferences in $\theta$, since the lower contour sets coincide exactly for each player across the states. So if two different alignments generate some common set of single-peaked preferences, whenever these preferences are held by agents, the outcomes must also be exactly the same. By the requirement of Nash implementation in the public information, we must then have shared-monotonicity.

It turns out that these two conditions, along with NVP, are also sufficient for any SCF to be implementable. We use a fairly standard mechanism to show this. Agents are asked to report the state i.e. an alignment and a preference single-peaked according to that alignment, and also their preferred alternative and an integer. If at least N-1 agents agree on the alignment, then that alignment is fixed as the public state. If not, the agent reporting the highest integer gets his preferred alternative.

\begin{theorem}\label{neccsuffthm}
If a social choice function $f$ is implementable in mixed information equilibria, then $f_{\succ}$ is strategy-proof for all $\succ$ in the support of $\mu$, and $f$ is shared-monotonic.

Conversely, if a social choice function $f$ is such that $f_{\succ}$ is strategy-proof for all $\succ$ in the support of $\mu$, $f$ is shared-monotonic, and $f$ satisfies no veto power, then it is implementable in mixed information equilibria.
\end{theorem}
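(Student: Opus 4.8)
The plan is to prove the two directions separately.

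For necessity, suppose $(M,g)$ implements $f$ in mixed information equilibria. First I would establish strategy-proofness of each $f_\succ$. Fix an alignment $\succ$ in the support of $\mu$ and restrict attention to states sharing this alignment. The dominant-strategy condition (condition 2 of the definition) says that, holding the public report fixed at the true $\succ$, no agent can profit by misreporting his private preference $P_i$. This is precisely the statement that truthful reporting of $P_i$ is a dominant strategy in the private component, and by the standard revelation-principle argument it yields that $f_\succ$ is strategy-proof. Next I would establish shared-monotonicity. Suppose $\succ$ and $\succ'$ admit a common single-peaked profile $P$, and consider the equilibrium message $m^*$ at the state $(\succ, P)$ with $g(m^*)=f(\succ,P)$. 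The key observation is that each agent's lower-contour set at $P_i$ is identical whether the ambient alignment is $\succ$ or $\succ'$, since the preference $P_i$ itself is unchanged. Hence the Nash-equilibrium conditions that $m^*$ satisfies at $(\succ,P)$ transfer verbatim to the state $(\succ',P)$, so $m^*$ remains a Nash equilibrium there; by the uniqueness clause (the second bullet of condition 1) its outcome must equal $f(\succ',P)$, giving $f(\succ,P)=g(m^*)=f(\succ',P)$.

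For sufficiency, I would exhibit the canonical Maskin-style mechanism sketched before the theorem and verify it works. Each agent reports $m_i=(\succ_i, P_i, a_i, z_i)$, an alignment, a single-peaked preference, a preferred alternative, and an integer. The outcome rule has two regimes. In the \emph{agreement regime}, where at least $n-1$ agents report a common alignment $\succ$, fix that $\succ$ as the public state and apply a direct revelation mechanism for the strategy-proof rule $f_\succ$ to the reported profile $(P_i)$. In the \emph{disagreement regime}, run the modulo/integer device: the agent announcing the highest integer (ties broken by index) receives his announced alternative $a_i$. I would then check the three equilibrium requirements of the definition: that truthful reporting is an equilibrium with the right outcome, that unilateral deviations in either component are unprofitable (condition 1's first bullet and condition 2), and that there are no unwanted equilibria agreeing with the true profile on the private component (condition 1's uniqueness bullet).

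The main obstacle, and where most of the work lies, is verifying the uniqueness clause of Nash implementation, i.e. ruling out spurious equilibria. Here the argument bifurcates on the regime of a candidate equilibrium $\bar m$. If all $n$ agents agree on some alignment, strategy-proofness of $f_\succ$ pins down the private reports and hence the outcome; shared-monotonicity is exactly what guarantees this outcome coincides with $f$ evaluated at the \emph{true} public state when the reported profile happens to be single-peaked under both alignments, which is what makes the equilibrium outcome well-defined and correct. If fewer than $n-1$ agents agree, then every agent is pivotal through the integer device and can deviate to obtain his top-ranked alternative, so such a profile can only be an equilibrium when everyone is already getting their peak, which no-veto-power converts into the unanimous outcome; I must confirm this matches $f$. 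The delicate case is exactly $n-1$ agents agreeing: the lone dissenter can force the disagreement regime by changing his alignment report, so I must check that he cannot thereby obtain a strictly preferred alternative, which is where the integer-overbidding structure and $n\ge 3$ are used. Throughout, shared-monotonicity is the linchpin ensuring that outcomes depend only on the reported preference profile and not on which of several compatible alignments was announced, so that the public-information equilibrium selects the true alignment's outcome unambiguously.
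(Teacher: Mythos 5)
Your necessity argument and your choice of mechanism for sufficiency are essentially identical to the paper's: strategy-proofness of each $f_{\succ}$ falls out of the dominant-strategy clause applied to two equilibrium messages, shared-monotonicity follows because the equilibrium message at $(\succ,P)$ remains a Nash equilibrium at $(\succ',P)$ (identical lower contour sets) and the uniqueness bullet then forces equal outcomes, and the implementing mechanism is the same ``$n-1$ agree on the alignment, else integer game'' construction. That part is fine.

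There is, however, a concrete error in your verification of the uniqueness clause in what you call the delicate case. You claim that when exactly $n-1$ agents agree on an alignment, ``the lone dissenter can force the disagreement regime by changing his alignment report.'' He cannot: whatever alignment he switches to, the other $n-1$ agents still agree, so the mechanism stays in the agreement regime. The agents who \emph{can} trigger the integer game are the $n-1$ agreeing ones --- if any one of them defects to a third alignment, only $n-2 < n-1$ agents agree and the disagreement regime activates, letting the defector grab his top by overbidding. Consequently such a profile can be a Nash equilibrium only if the outcome is already top-ranked for each of the $n-1$ agreeing agents, i.e.\ they share a common peak, and it is precisely no veto power that then forces this outcome to coincide with $f(\succ,P)$. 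You instead assign NVP's role to the ``fewer than $n-1$ agree'' configuration and propose to handle the $n-1$ case via ``the integer-overbidding structure and $n\ge 3$'' applied to the dissenter, so the check you describe is aimed at a deviation that does not exist and misses the one that does. The fix is exactly the paper's argument; the rest of your outline (truthful-alignment equilibria handled by strategy-proofness, unanimous-deviant-alignment equilibria handled by shared-monotonicity via the restriction that reported preferences lie in $\mathcal{K}$) is sound.
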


\section{Implementability of unanimous and anonymous rules}


Now, after identifying the right solution concept and the right set of tools to check for it, we turn towards identifying what implementable rules actually look like. Recall that our motivation was to apply robustness requirements on Moulin (1980), looking at unanimous and anonymous rules. This classic paper led to the characterization that a rule is strategy-proof, unanimous and, anonymous if and only if it is a generalized median rule. These rules have an intuitive and appealing visual representation. First, place the alternatives on a line left to right according to the underlying alignment. Then, place $n-1$ ``phantom top" on the alternatives on the line - each generalized median rule is associated with a unique placement of these phantom tops. Given any profile of preferences of the agents, identify each agent's most preferred alternative, or ``top", and mark them on the line. The outcome of the generalized median rule is the median of these $n$ agent tops and the $n-1$ phantom tops.

Part of the appeal of these properties lies in their normative nicety. If all the agents like the same alternative, it would be quite reasonable for a social rule to assign them that alternative. Equality amongst the agents as well is a fairly standard requirement. The intuitive nature of the generalized median rules also adds to the appeal of these properties, since in essence they are tied to these rules by the characterization. However, importantly, the class of generalized median rules is vast. In fact, if we have $m$ alternatives and $n$ agents, his result gives us ${m+n-2}\choose{n-1}$ rules to choose from. Our results will be able to identify which of these will be robust to designer uncertainty in the underlying alignment.

At this point, we should be more specific by what we mean by designer uncertainty. We assume that the designer has some belief over the possible alignments according to which the domain of agents' preferences is single-peaked. The result in the previous subsection identifies general conditions on implementable SCFs. The strength of these conditions, of course, is dependent on the kind of belief that the designer has. Varying her belief will vary the type of SCFs that will be implementable. In order to be robust to changes in belief probabilities, we work with the support of the belief i.e. the designer only uses information about the alignments that she places a positive probability on, and not the specific probability that she places on them.

It will also be useful to state Moulin's result here, since we will draw on it later.

\begin{theorem}[Moulin (1980)]
In a single-peaked domain, a social choice function is strategy-proof, anonymous and unanimous if and only if it is a generalized median rule.
\end{theorem}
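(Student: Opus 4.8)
Fix the alignment and write it as $a_1 \prec a_2 \prec \cdots \prec a_m$, and for a profile let $x_i := P_i(1)$ denote agent $i$'s peak. A generalized median rule with phantom tops $p_1 \preceq \cdots \preceq p_{n-1}$ returns $\mathrm{med}(x_1,\dots,x_n,p_1,\dots,p_{n-1})$, the $n$-th order statistic of these $2n-1$ alternatives in the alignment order. Anonymity is immediate, since the median of a multiset is invariant to permutations of the agents' reports. For unanimity, if $x_i = a$ for every $i$ then the list contains $n$ copies of $a$; as at most $n-1$ of the remaining values can lie strictly below $a$ and at most $n-1$ strictly above, the $n$-th order statistic is forced to equal $a$. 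For strategy-proofness, fix $x_{-i}$ and view the outcome as a function of $x_i$: it is weakly monotone in $x_i$, equals $x_i$ whenever $x_i$ lies weakly between the boundaries determined by the other peaks and the phantoms, and is otherwise pinned at the nearer boundary independently of $x_i$. In every case truthful reporting delivers the attainable alternative closest (in $\prec$) to the agent's peak, so by single-peakedness no deviation is profitable.

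\textbf{The ``only if'' direction.} Let $f$ be strategy-proof, anonymous and unanimous. I would first prove that $f$ is \emph{tops-only}: if $P_i$ and $P_i'$ share a peak, then $f(P_i,P_{-i}) = f(P_i',P_{-i})$. Applying strategy-proofness once from each report shows each outcome is weakly preferred to the other under the respective preference, and since the two preferences share a peak, single-peakedness forces the outcomes to coincide; hence $f$ may be treated as a function of the peak profile $(x_1,\dots,x_n)$ alone. I would then establish the workhorse lemma of \emph{uncompromisingness}: if $f(x) = y$ and $x_i \succeq y$, then $f(x_i',x_{-i}) = y$ for every $x_i' \succeq y$, and symmetrically on the left. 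This is proved by contradiction — were the outcome to move when $i$ shifts its peak on the far side of $y$, one of the two reports could profitably mimic the other, violating strategy-proofness on the single-peaked domain.

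Next I would read off the phantoms: by anonymity the outcome depends only on the multiset of peaks, so for $k = 1,\dots,n-1$ I define $p_k$ to be the outcome at the profile in which $n-k$ agents have peak $a_1$ and $k$ agents have peak $a_m$, with monotonicity (a consequence of strategy-proofness and uncompromisingness) giving $p_1 \preceq \cdots \preceq p_{n-1}$. Finally, I would verify the median formula at an arbitrary profile by starting from this all-extremes profile and sliding each agent's peak to its true position one at a time, invoking uncompromisingness at each step to track whether and how the outcome moves, so that anonymity reduces the argument to bookkeeping over how many peaks and phantoms lie on each side of the running outcome, with terminal value exactly $\mathrm{med}(x_1,\dots,x_n,p_1,\dots,p_{n-1})$. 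The main obstacle is precisely this last step together with uncompromisingness: the deviation argument for uncompromisingness and the careful count-tracking in the induction carry all the real content, whereas the tops-only reduction and the extraction of the phantoms are comparatively routine.
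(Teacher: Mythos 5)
The paper does not prove this statement: it is quoted as a known result of Moulin (1980), so there is no in-paper argument to compare against. Judged on its own terms, your reconstruction follows the standard route (order-statistic formula for the ``if'' direction; tops-onlyness, uncompromisingness, phantom extraction, and an induction over peak movements for the ``only if'' direction), and the ``if'' direction, the identification of the phantoms $p_k$ as the outcomes at the profiles with $n-k$ peaks at $a_1$ and $k$ at $a_m$, and the overall architecture are all correct.

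There is one genuine gap: the justification of the tops-only lemma. You argue that if $P_i$ and $P_i'$ share a peak $p$ and $y = f(P_i,P_{-i})$, $y' = f(P_i',P_{-i})$, then strategy-proofness gives $y \mathbin{R_i} y'$ and $y' \mathbin{R_i'} y$, and ``single-peakedness forces the outcomes to coincide.'' That inference is valid only when $y$ and $y'$ lie on the same side of $p$, because only then do all single-peaked preferences with peak $p$ agree on the ranking of $y$ versus $y'$. If $y \prec p \prec y'$, it is perfectly consistent with single-peakedness that $y \mathbin{P_i} y'$ while $y' \mathbin{P_i'} y$ (e.g., with alignment $a \prec b \prec c \prec d$ and peak $b$, both $b\,a\,c\,d$ and $b\,c\,a\,d$ are single-peaked and disagree on $a$ versus $c$), so the two strategy-proofness inequalities do not collide. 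Closing this case is exactly where the content of the tops-only property of the single-peaked domain lives: one must either pass through an intermediate preference with peak $p$ chosen to break the tie in a useful direction, or show directly that the option set $O_i(P_{-i}) = \{f(\tilde P_i, P_{-i})\}$ is an interval of the alignment (so that the best attainable alternative depends only on the peak). Everything downstream of this lemma in your sketch — uncompromisingness, monotonicity of the $p_k$, and the peak-sliding induction — is the standard and correct path, though it is stated at the level of a plan rather than a proof. A smaller remark: Moulin's original theorem is set on a continuum with preferences identified by peaks; the version stated here is on a finite set of alternatives with full strict single-peaked preferences, which is precisely why the tops-only step cannot be waved away.
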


A generalized median rule specifies $n-1$ ``phantom" voters who each vote for a given alternative regardless of the preferences of the agents. The rule chooses the median of the $n$ most-preferred alternatives of the agents and these $n-1$ phantom votes. An example is shown in Figure \ref{fig:MoulMed} - note that $P_1$ and $P_2$ are the top preferences of the agents and $F_1$ is the fixed phantom vote that identifies the rule.

\begin{figure}[htp]
    \centering
    \includegraphics[width=8cm]{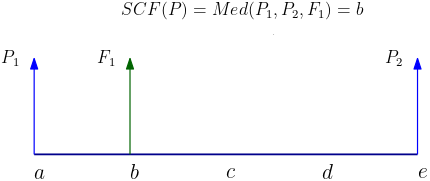}
    \caption{Moulin's generalized median SCF.}
    \label{fig:MoulMed}
\end{figure}

For us, the following definitions of unanimity and anonymity will be applicable.

\begin{definition}
Let $P(1)$ identify the most preferred alternative in $P$, also called its peak. The SCF $f$ is \textbf{unanimous} if for all preference profiles $(P_i)$ such that $P_i(1) = a$ for some $a \in X$ and for all $i \in N$, we have $f(\succ, P) = a$.
\end{definition}

\begin{definition}
The SCF $f$ is \textbf{anonymous} if for all preference profiles $(P_i)$ and for all bijective functions $\sigma:N \to N$, we have $f(\succ,(P_i)) = f(\succ,(P_{\sigma(i)}))$.
\end{definition}

Since implementable SCFs must be strategy-proof for each $\succ$, it is clear that adding unanimity and anonymity implies that implementable SCFs must have some median rule for each $f_{\succ}$. A slightly more subtle point relates to the property of no veto power. Consider what happens when there are no phantoms of a median rule at one of the two extreme ends. Then, one can always construct a profile where a single agent would indeed have a veto - for example with all but one agent having their top preference at the end without a phantom. At the same time, having at least one phantom on both ends ensures that there is no veto at any profile.

\begin{observation}
If an anonymous and unanimous SCF $f$ is implementable, then $f_{\succ}$ must be a median rule for all alignments $\succ$ in the support of $\mu$. It satisfies no veto power if and only if each $f_{\succ}$ places at least one phantom on the extreme ends of $\succ$. We call such SCFs \textbf{no veto projected median SCFs (NVPMS)}.
\end{observation}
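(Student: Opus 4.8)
The plan is to derive the first claim directly from Moulin's theorem and to settle the second by a single order-statistic count on the median rule.

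For the first assertion I would begin by noting that anonymity and unanimity of $f$ descend to each restriction $f_{\succ}$. Fixing an alignment $\succ$ in the support of $\mu$, a profile that is single-peaked according to $\succ$ remains single-peaked according to $\succ$ after any relabelling of agents, so the defining identity $f(\succ,(P_i)) = f(\succ,(P_{\sigma(i)}))$ read at fixed $\succ$ is exactly anonymity of $f_{\succ}$; likewise the unanimity clause read at fixed $\succ$ is unanimity of $f_{\succ}$. Theorem \ref{neccsuffthm} already supplies that an implementable $f$ has $f_{\succ}$ strategy-proof for every $\succ$ in the support. Hence each $f_{\succ}$ is strategy-proof, anonymous, and unanimous on the single-peaked domain generated by $\succ$, and the theorem of \cite{Moulin1980} stated above forces $f_{\succ}$ to be a generalized median rule.

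For the second assertion, fix $\succ$ and write the alternatives as $x_1 \prec x_2 \prec \cdots \prec x_m$, so $x_1$ and $x_m$ are the two extreme ends. The rule $f_{\succ}$ returns the median, i.e. the $n$-th order statistic, of the $2n-1$ values consisting of the $n$ agent peaks and the $n-1$ phantom peaks. I would prove necessity by contraposition: if no phantom sits at $x_1$, take the profile where $n-1$ agents peak at $x_1$ and the last agent peaks at $x_m$. Then all $n-1$ phantoms lie strictly to the right of $x_1$, only the $n-1$ agent peaks equal $x_1$, and the remaining $n$ values all exceed $x_1$; the $n$-th order statistic is therefore strictly to the right of $x_1$, the outcome is not $x_1$, and NVP fails. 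The mirror-image construction at $x_m$ handles the right end, so NVP forces at least one phantom at each extreme.

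For sufficiency I would assume one phantom at $x_1$ and one at $x_m$ and check NVP directly. Suppose $n-1$ agents peak at $a$ while the free agent peaks at $b$; if $b=a$ the outcome is $a$ by unanimity, so take $b \succ a$ by symmetry. It suffices to show that at most $n-1$ of the values are strictly below $a$ and at most $n-1$ strictly above $a$, which pins the $n$-th order statistic to $a$. Strictly below $a$: no agent peak qualifies, and since a phantom sits at $x_m \succ a$ at most $n-2$ phantoms can, so the count is at most $n-2$. Strictly above $a$: only the single peak $b$ among the agents, and since a phantom sits at $x_1 \preceq a$ at most $n-2$ phantoms can, for a total of at most $n-1$. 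Hence the median equals $a$. I expect this count to be the only delicate point, owing to the boundary cases where $a$ is itself extreme, where the guaranteed end-phantom coincides with $a$, or where the free agent lies on either side of $a$; each case merely toggles which inequalities are strict versus weak, and I would verify that the two bounds of $n-1$ survive in every case, after which the conclusion is immediate.
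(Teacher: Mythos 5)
Your proof is correct and follows essentially the same route the paper sketches in the text preceding the Observation: combine Theorem \ref{neccsuffthm} with Moulin's characterization for the first claim, and for the second use the profile with all but one agent peaked at a phantom-free extreme to break NVP, together with an order-statistic count showing that one phantom at each end restores it. Your write-up merely makes the paper's informal counting argument explicit.
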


\begin{figure}[htp]
    \centering
    \includegraphics[width=8cm]{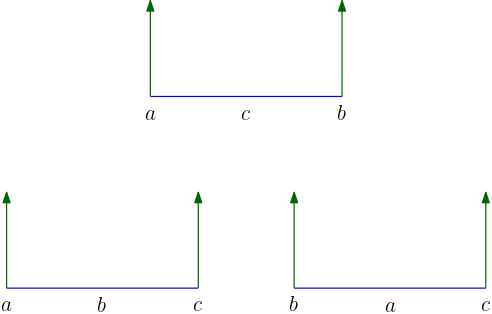}
    \caption{An NVPMS. Note that the three alignments that the designer believes possible are $acb$, $abc$, and $bac$. The green arrows represent position of the phantoms.}
    \label{fig:NVPMS}
\end{figure}

We present five results here, progressing in a natural way over possible beliefs of the designer. First, suppose the designer places positive probabilities over exactly two alignments, each of which is the reverse of the other. We may think of this as a good starting point, since the single-peaked domains produced by each alignment is exactly the same.

\begin{observation}
Let $\hat{\succ}^{-1}$ denote the alignment where the ordering of alternatives of $\hat{\succ}$ is exactly reversed. If $supp(\mu) = \{\hat{\succ}, \hat{\succ}^{-1} \}$ for some $\hat{\succ} \in \mathcal{P}$, then any NVPMS with $f_{\hat{\succ}} = f_{\hat{\succ}^{-1}}$ is implementable.
\end{observation}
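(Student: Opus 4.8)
The plan is to invoke the sufficiency direction of Theorem~\ref{neccsuffthm}: to prove implementability it is enough to check that any such $f$ (i) has $f_\succ$ strategy-proof for each $\succ \in \mathrm{supp}(\mu)$, (ii) is shared-monotonic, and (iii) satisfies NVP. Two of these come essentially for free from the structure of an NVPMS, and the entire content of the observation is concentrated in verifying shared-monotonicity.

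First I would dispatch strategy-proofness and NVP. By the defining property of an NVPMS, each restriction $f_\succ$ with $\succ \in \{\hat\succ, \hat\succ^{-1}\}$ is a generalized median rule, and Moulin's theorem guarantees that every generalized median rule is strategy-proof, giving (i). For (iii), the NVPMS definition requires that each $f_\succ$ places at least one phantom on each of the two extreme ends of $\succ$, and the preceding Observation already records that this phantom placement is equivalent to $f$ satisfying NVP; hence (iii) holds by construction.

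The step that needs the most care, and the real substance of the claim, is shared-monotonicity. I must show $f(\succ, P) = f(\succ', P)$ for every pair $\succ, \succ' \in \mathrm{supp}(\mu)$ admitting a common single-peaked profile $P$. Pairs with $\succ = \succ'$ are trivial, so the only case is $\{\succ, \succ'\} = \{\hat\succ, \hat\succ^{-1}\}$. Here I would appeal to the footnote observation that an alignment and its exact reverse induce identical single-peaked domains: every profile $P$ single-peaked with respect to $\hat\succ$ is also single-peaked with respect to $\hat\succ^{-1}$ and conversely, so the ``common domain'' in the shared-monotonicity hypothesis is in fact the whole single-peaked domain. But the hypothesis $f_{\hat\succ} = f_{\hat\succ^{-1}}$ says precisely that $f(\hat\succ, P) = f(\hat\succ^{-1}, P)$ for every profile $P$ in that domain. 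Thus shared-monotonicity follows immediately from the assumed equality, and together with (i) and (iii) the sufficiency half of Theorem~\ref{neccsuffthm} yields implementability.

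The one subtlety worth flagging is that $f_{\hat\succ} = f_{\hat\succ^{-1}}$ is a genuinely satisfiable hypothesis and not vacuous: since the median of a collection of points on a line does not depend on the line's orientation, using the same phantom alternatives under both orientations already produces equal restrictions, so NVPMS of the required form exist. I expect this reversal-invariance of the median, rather than any delicate mechanism construction, to be the only genuine idea in the argument, with everything else reducing to bookkeeping against Theorem~\ref{neccsuffthm}.
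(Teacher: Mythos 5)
Your proposal is correct and follows the same route the paper intends: the paper gives no formal appendix proof for this observation, but its surrounding text makes exactly your argument — the reversed alignment induces the identical single-peaked domain, so shared-monotonicity reduces to the hypothesis $f_{\hat\succ} = f_{\hat\succ^{-1}}$, and strategy-proofness plus NVP come from the NVPMS structure, so the sufficiency half of Theorem~\ref{neccsuffthm} applies. Your closing remark on the reversal-invariance of the median correctly explains why the hypothesis is non-vacuous, which the paper leaves implicit.
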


When the designer is dealing with such a belief, any report of preference profiles could have been produced by either of the alignments. Shared-monotonicity then kicks in with full force - it must be the case that the SCFs chosen for each alignment must then be exactly the same.

This leads us to ask whether there might be cases where shared-monotonicity might have no bite at all. Indeed, whenever the two sets of single-peaked preferences for two alignments are disjoint, shared-monotonicity is satisfied vacuously, and we have the following result.

\begin{proposition}\label{delta}
Let $\mathcal{D}_{\succ}$ be the set of all preferences that are single-peaked according to $\succ$. If for all $\succ, \succ^{'} \in supp(\mu)$, $\mathcal{D}_{\succ} \cap \mathcal{D}_{\succ^{'}} = \emptyset$, then any NVPMS is implementable.
\end{proposition}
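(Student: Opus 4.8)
The plan is to invoke the sufficiency direction of Theorem \ref{neccsuffthm}, which reduces implementability to three directly checkable conditions: strategy-proofness of $f_{\succ}$ for every $\succ \in supp(\mu)$, shared-monotonicity of $f$, and no veto power. Since $f$ is assumed to be an NVPMS, I would argue that two of these three conditions come essentially for free from the definition, while the disjointness hypothesis supplies the remaining one.

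First I would dispatch strategy-proofness and no veto power together. Being an NVPMS means $f$ is anonymous and unanimous with each restriction $f_{\succ}$ equal to a generalized median rule, so Moulin's theorem immediately delivers strategy-proofness of each $f_{\succ}$. The defining requirement of an NVPMS, that every $f_{\succ}$ carry at least one phantom at each extreme end of $\succ$, is by the Observation exactly equivalent to no veto power. Neither of these steps invokes the disjointness hypothesis; they hold for any NVPMS whatsoever.

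The substantive step is shared-monotonicity, and here I would use disjointness to render the condition vacuous. Shared-monotonicity imposes a constraint across a pair $\succ, \succ^{'}$ only when some profile $P$ is single-peaked according to both alignments. But a profile $(P_i)_{i \in N}$ is single-peaked according to an alignment precisely when each component preference $P_i$ is, so a profile single-peaked according to both $\succ$ and $\succ^{'}$ would force every $P_i \in \mathcal{D}_{\succ} \cap \mathcal{D}_{\succ^{'}}$. The hypothesis that $\mathcal{D}_{\succ} \cap \mathcal{D}_{\succ^{'}} = \emptyset$ for all distinct $\succ, \succ^{'} \in supp(\mu)$ rules out any such profile, so the premise of the shared-monotonicity requirement is never satisfied and the condition holds vacuously. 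Combining the three verified conditions with the sufficiency direction of Theorem \ref{neccsuffthm} then yields that $f$ is implementable.

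I do not expect a genuine obstacle in this argument; the only subtlety worth flagging is the bookkeeping around the reverse-alignment equivalence noted in the footnote. Because an alignment and its reverse induce the same single-peaked domain, $\mathcal{D}_{\succ} = \mathcal{D}_{\succ^{-1}}$, the disjointness hypothesis can only be read sensibly over equivalence classes of alignments, and ``distinct'' elements of $supp(\mu)$ must be understood in that quotient sense. Once this is pinned down, the vacuity argument for shared-monotonicity goes through unchanged, and the result follows.
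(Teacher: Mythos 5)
Your proposal is correct and matches the paper's reasoning exactly: the paper treats this proposition as an immediate consequence of the sufficiency direction of Theorem \ref{neccsuffthm}, with strategy-proofness and no veto power coming from the definition of an NVPMS (via Moulin's theorem and the Observation) and shared-monotonicity holding vacuously under the disjointness hypothesis. The paper does not even include a separate formal proof in the appendix, so your write-up is if anything more complete than the original.
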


This is a very permissive result, since it gives a free hand to the designer to choose any median rule for each alignment, as long as it places at least one phantom on both ends of each alignment. 

We would quite naturally be interested to know the circumstances under which the designer would be much more constrained than this in her choices. Consider then the case where the designer is uncertain entirely about the underlying alignment, that is, she places a positive probability over every possible alignment in $\mathcal{P}$. This would be the case when the designer wants her mechanism to robust to any possible underlying alignment, for example when designing a voting system where the voters may end up having any common concern, or even any change in the alignment over time. 

The result for such a belief structure requires a few notations, which we present before the theorem.

\begin{definition}
Given $|A| = 3$, we call an SCF a \textbf{symmetric order-statistic SCF} if for any alignment $\succ \ \in \{abc, bca, cab\}$ where $a,b,c \in X$, $f{\succ}$ places $k$ phantoms on the leftmost alternative and $(n-1)-k$ on the rightmost alternative, with $1 \leq k \leq (n-2)$. Note that in alignment $abc$, $a$ is the leftmost alternative and $c$ is the rightmost alternative.
\end{definition}

\begin{figure}[htp]
    \centering
    \includegraphics[width=8cm]{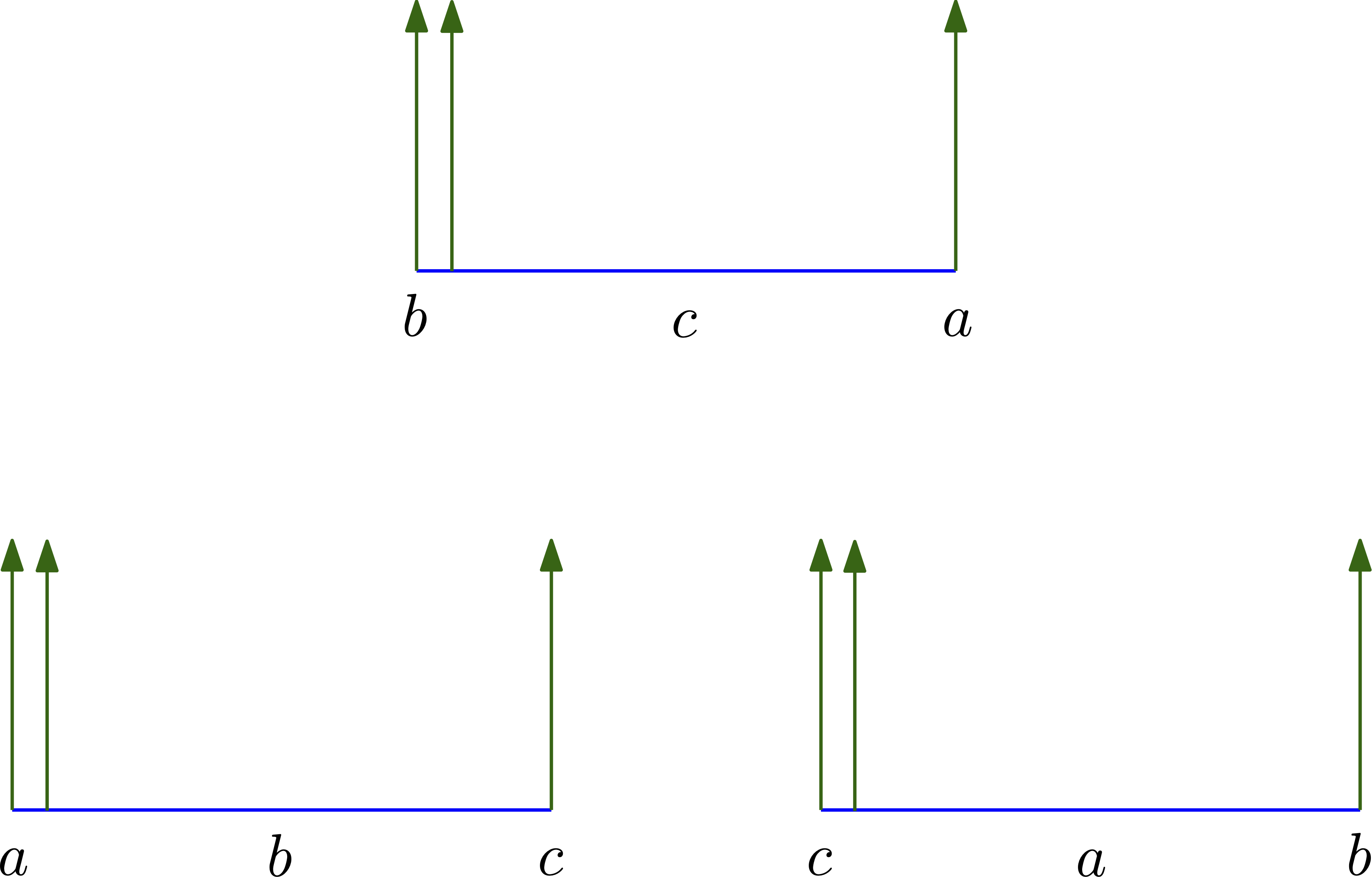}
    \caption{A symmetric order-statistic SCF. Note that for each alignment, the phantoms are at the ends and in the same pattern across alignments.}
    \label{fig:OrdStat}
\end{figure}

\begin{definition}
Given $n$ is odd, we call an SCF a \textbf{true median SCF} if $\forall \succ \ \in supp(\mu)$, $f_{\succ}$ chooses the median of the reported peaks of the agents.
\end{definition}

\begin{figure}[htp]
    \centering
    \includegraphics[width=8cm]{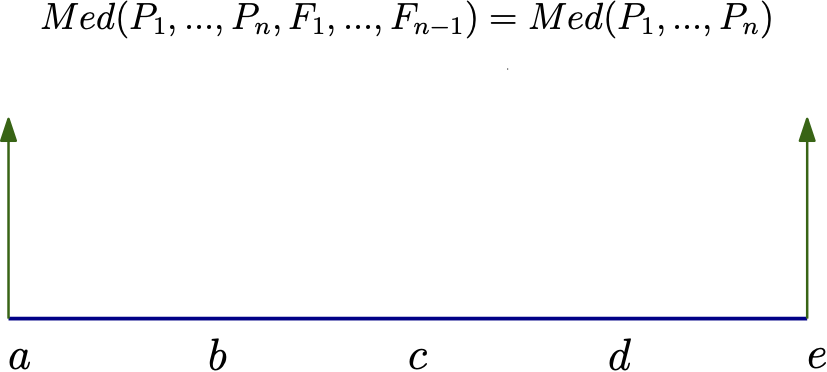}
    \caption{A true median SCF. Note that all the phantoms must be equally distributed at the two ends.}
    \label{fig:TMM}
\end{figure}


\begin{theorem}\label{FullSupp}
Suppose $supp(\mu) = \mathcal{P}$ and $|A| \geq 3$. Then
\begin{enumerate}
    \item If $|A| = 3$, then an SCF is implementable if and only if it is a symmetric order-statistic mechanism.
    \item If $n$ is even and $|A| > 3$, then there are no implementable SCFs.
    \item If $n$ is odd and $|A| > 3$, then an SCF is implementable if and only if it is the true median mechanism.
\end{enumerate}
\end{theorem}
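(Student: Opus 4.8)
The plan is to convert the theorem into a purely combinatorial question about phantom placements and then solve it. By the preceding Observation, an anonymous and unanimous implementable SCF must be an NVPMS, so it suffices to decide which NVPMS are implementable. Since each $f_{\succ}$ of an NVPMS is a median rule (hence strategy-proof) and NVP is built into the definition, Theorem \ref{neccsuffthm} says that \emph{for an NVPMS, implementability is equivalent to shared-monotonicity}. Thus the whole theorem reduces to: when $supp(\mu)=\mathcal{P}$, classify the shared-monotonic NVPMS.

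The key device is to read shared-monotonicity off pairs of alignments differing by a single adjacent transposition. If $\succ$ and $\succ'$ agree except that two $\succ$-adjacent alternatives $u,v$ are swapped, then a profile is single-peaked for both exactly when every agent's peak lies in $\{u,v\}$ with $\{u,v\}$ as its top two alternatives. On such profiles both anonymous median rules return $u$ or $v$ according to a threshold on the number $s$ of agents peaked at $v$, and equating the two thresholds gives one linear relation $\Lambda(\succ,j)+\Lambda(\succ',j)=n-1$, where $\Lambda(\succ,j)$ counts the phantoms in the first $j$ positions of $\succ$ and the swap is at positions $(j,j+1)$. For $|A|=3$ there are exactly three alignment classes and three such relations; writing them out with the end conditions and solving the resulting small linear system forces every interior phantom to vanish and pins the three end-splits to a single common level $k$ with $1\le k\le n-2$ --- exactly the symmetric order-statistic family, which proves part (1). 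Conversely these rules satisfy all three relations, hence are shared-monotonic.

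For $|A|>3$ the same relations become far more restrictive. First I would show, again by choosing suitable commonly single-peaked profiles, that no NVPMS carrying an interior phantom can be shared-monotonic, so all phantoms sit at the two extreme ends; write $\ell(\succ)$ for the left-end count. With no interior phantoms every relation collapses to $\ell(\succ)+\ell(\succ')=n-1$, so each adjacent transposition sends $\ell\mapsto n-1-\ell$ and $\ell$ depends only on the parity of the alignment. The reversal convention supplies the independent identity $\ell(\succ^{-1})=n-1-\ell(\succ)$, and comparing it against the parity of the reversal permutation forces $\ell=n-1-\ell$, i.e.\ the even split $\ell=(n-1)/2$. Hence for $n$ odd the rule must be the true median (part (3)), while for $n$ even no integer split exists and no NVPMS is shared-monotonic (part (2)).

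Sufficiency in part (3) is clean: for $n$ odd the true-median outcome is the median of the peaks, which by the median-voter theorem is the unique Condorcet (pairwise-majority) winner of the profile; this winner is determined by the preferences alone and is independent of the alignment, so it automatically agrees across any two alignments sharing a profile, giving shared-monotonicity. The main obstacle is the necessity step for $|A|>3$: rigorously eliminating interior phantoms, and closing the argument uniformly over all $m>3$. The adjacent-transposition relations together with the reversal identity settle everything when the reversal permutation is even, but when it is odd one must produce an odd cycle in the $\ell\mapsto n-1-\ell$ rule using auxiliary relations from profiles whose peaks span three consecutive alternatives. Marshalling exactly enough of these extra relations to kill the free parameter $k$ that survives in the $|A|=3$ case is the technical heart of the proof.
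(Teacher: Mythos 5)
Your reduction and your adjacent-transposition relation $\Lambda(\succ,j)+\Lambda(\succ',j)=n-1$ are essentially a reversal-equivalent restatement of the paper's Symmetry lemma (Lemma \ref{2Lemm}), so the overall strategy is the same family as the paper's; the $|A|=3$ computation, once you add non-negativity of the phantom counts to the linear relations, does recover the symmetric order-statistic form. But there are two genuine gaps. The first is the step you yourself flag as open, and it is the actual content of the theorem for $|A|>3$: once all phantoms sit at the ends, the transposition relations only say that $\ell$ flips under each adjacent transposition, and for $m\equiv 2,3\pmod 4$ the reversal permutation is odd, so the identity $\ell(\succ^{-1})=n-1-\ell(\succ)$ is consistent with \emph{every} value of $\ell$ and the free parameter survives. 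The paper closes this with a concrete three-alignment cycle --- $\succ=abcd\cdots$, $\succ'=adbc\cdots$, $\succ''=cbad\cdots$ --- whose shared adjacent pairs $\{b,c\}$, $\{a,d\}$, $\{a,b\}$ yield two ``same-count'' links and one ``complementary-count'' link, forcing $k=n-1-k$. Without producing such an odd cycle, parts (2) and (3) are unproven; the ``no interior phantoms'' claim for $|A|>3$ is likewise only asserted, and the paper proves it by another explicit three-alignment construction.

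The second gap is a circularity you do not flag: the opening reduction ``implementable NVPMS $\Leftrightarrow$ shared-monotonic'' is only valid in the sufficiency direction. The necessity half of Theorem \ref{neccsuffthm} delivers strategy-proofness and shared-monotonicity but \emph{not} NVP, and the Observation does not assert that implementable rules satisfy NVP. The rules placing all $n-1$ phantoms on one end (the $k=0$ and $k=n-1$ order statistics) are strategy-proof and, under full support, satisfy every one of your linear relations, hence are shared-monotonic; they fail only NVP. They therefore pass both necessary conditions available to you, yet the theorem excludes them ($1\le k\le n-2$). Restricting attention to NVPMS from the outset assumes exactly what must be shown here. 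The paper handles this with a separate direct argument --- an agent with peak at $c$ facing $n-1$ peaks at $a$ or $b$ can profitably misreport the alignment --- and your proof needs an analogous step.
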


The surprising thing about this result is how it picks a handful of median rules from the vast class identified by Moulin. From ${m+n-2}\choose{n-1}$ possible rules, these have been whittled down to just $1$ when we have $m>3$ and $n$ is odd. What this means is that if the designer actually has some (possibly very small) non-zero belief over all possible alignments, she should choose the true-median so that she can have this additional bulwark of implementability in mixed information equilibria. This result (and others in this study) can thus also be seen as identifying special mechanisms within the class of median rules, so that designers can be assisted in choosing from within it. 

Let us now try to address what happens when we do not have the full support, but there are still alignments with common preferences in their single-peaked domains. Because a general characterization is bound to be messy, we consider an important special case.

\begin{definition}
Let $T_{\succ, \succ^{'}} = $ $\{x \in X \ | \ x = P(1) $ for some $ P \in \mathcal{D}_{\succ} \cap \mathcal{D}_{\succ^{'}} \}$. We say that the belief $\mu$ has \textbf{constant shared peaks} if $T_{\succ, \succ^{'}} = T$ for all $\succ, \succ^{'} \in supp(\mu)$, where $T \subseteq X$ is a fixed subset of the set of alternatives.
\end{definition}

\begin{figure}[htp]
    \centering
    \includegraphics[width=8cm]{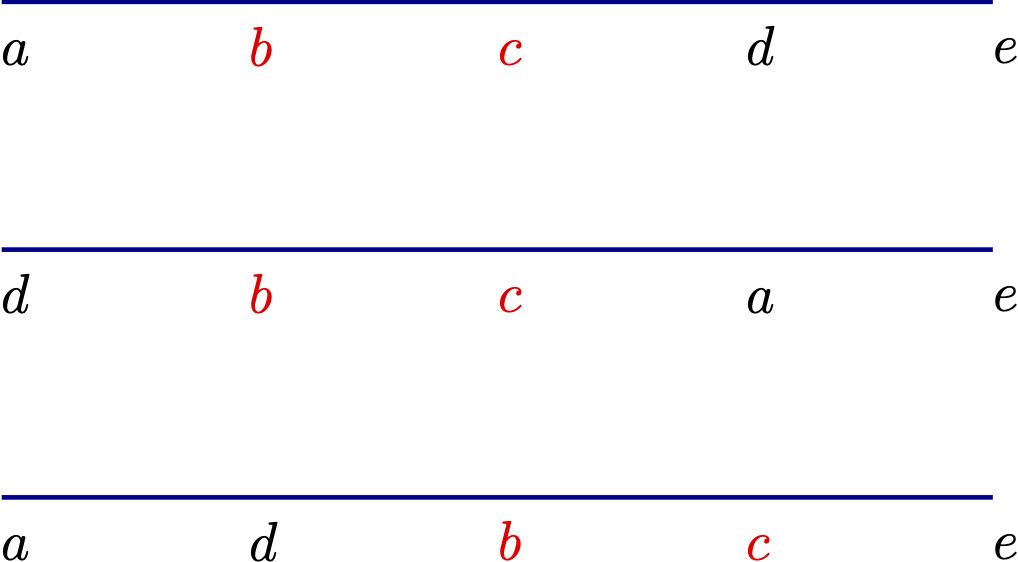}
    \caption{A set of alignments that have constant shared peaks. All preferences that are shared across these alignments have peaks on the red colored alternatives.}
    \label{fig:SP}
\end{figure}

In this case, as we show in a lemma that we use to prove our theorems, the shared peaks must be a contiguous subordering of each alignment, so one can imagine these tops as being lumped together in the same order, and then the other alternatives moving around to the left and right of this lump to generate the entire support. For such beliefs, there is a fairly large possibility result.

\begin{proposition}\label{SameShare}
Suppose $\mu$ has constant shared peaks. Then $T$ is a contiguous subordering of each alignment in $supp(\mu)$. Denote this subordering by $R$. An SCF is implementable if it is a projected median mechanism that, given any $\succ, \succ^{'} \ \in supp(\mu)$, places
\begin{enumerate}
    \item the same number of phantoms before and after $R$ for both $\succ$ and $\succ^{'}$
    \item at least one phantom on either end of both $\succ$ and $\succ^{'}$
    \item the same pattern of phantoms on $R$ for both $\succ$ and $\succ^{'}$
\end{enumerate}
\end{proposition}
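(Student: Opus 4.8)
The plan is to invoke Theorem~\ref{neccsuffthm} by verifying that any SCF $f$ satisfying conditions (1)--(3) is strategy-proof on each $f_{\succ}$, shared-monotonic, and satisfies no veto power. Strategy-proofness of each $f_{\succ}$ is immediate: condition (2) guarantees a phantom on each extreme, so each $f_{\succ}$ is a genuine projected median rule, and by Moulin's theorem these are strategy-proof (indeed they are anonymous and unanimous as well). No veto power follows from condition (2) together with the Observation already recorded in the excerpt, which ties the two-sided phantom placement exactly to NVP. So the entire burden of the proof falls on establishing \emph{shared-monotonicity}, and this is where conditions (1) and (3), combined with the contiguity claim, do the work.

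First I would prove the structural claim that $T$ is a contiguous subordering $R$ of each alignment. Suppose not: then in some $\succ \in supp(\mu)$ there exist $x, z \in T$ and $y \notin T$ with $x \succ y \succ z$ (or the reverse). The idea is that single-peakedness according to $\succ$ forces any preference with peak at $x$ to rank $y$ above $z$, and one with peak at $z$ to rank $y$ above $x$; by constructing suitable shared preferences peaked at $x$ and at $z$ one shows $y$ itself can be realized as a shared peak, contradicting $y \notin T$. Concretely, since $x,z \in T = T_{\succ,\succ'}$, there are preferences in $\mathcal{D}_{\succ} \cap \mathcal{D}_{\succ'}$ peaked at $x$ and at $z$; one then argues that $y$, lying between them in $\succ$, must lie between them in every $\succ'$ sharing these preferences, so a preference peaked at $y$ is also shared, forcing $y \in T$. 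This gives the contiguity of $R$.

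With contiguity in hand, I would characterize the shared domain $\mathcal{D}_{\succ} \cap \mathcal{D}_{\succ'}$. A preference $P$ is single-peaked according to \emph{both} $\succ$ and $\succ'$ precisely when its peak lies in $R$ and, crucially, its restriction to the alternatives is monotone along both alignments' orderings away from the peak. The key observation is that such shared preferences can only have peaks in $R$ (that is the definition of $T$), and the alternatives outside $R$ are arranged by each $P$ in a way compatible with both alignments. The main step is then to take any shared profile $P \in (\mathcal{D}_{\succ} \cap \mathcal{D}_{\succ'})^n$ and show $f(\succ, P) = f(\succ', P)$. Because every agent's peak lies in $R$, and $R$ sits in the same internal order in both alignments (by contiguity and the constant-shared-peaks hypothesis), the $n$ agent peaks occupy the same relative positions within $R$ under both $\succ$ and $\succ'$. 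Condition (1) ensures the same number of phantoms sit before and after the block $R$ in both alignments, and condition (3) ensures the phantoms \emph{inside} $R$ are placed identically; hence the full multiset of $n$ agent tops and $n-1$ phantom tops, read off in the $\succ$-order versus the $\succ'$-order, has its median land on the same alternative of $R$. Therefore $f(\succ,P) = f(\succ',P)$, which is exactly shared-monotonicity.

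The hard part will be the median-coincidence argument in the last step: one must verify that the median computation is genuinely insensitive to how the alternatives \emph{outside} $R$ are ordered, relying on the fact that for a shared profile all $n$ agent peaks lie in $R$ together with at least one phantom on each extreme end. The phantoms before $R$ (of which there are $k \geq 1$) and after $R$ (of which there are the same number in both alignments) act as a buffer so that the median never escapes the block $R$, and within $R$ the ordering is common to both alignments; this is what makes the identical-internal-phantom-pattern condition (3) the precise requirement. I would handle this by a careful counting argument on order statistics, checking that the number of tops (agent plus phantom) weakly to the left of any alternative in $R$ is the same under both alignments, so the median index selects the same element. Once this counting is pinned down, shared-monotonicity follows and Theorem~\ref{neccsuffthm} delivers implementability.
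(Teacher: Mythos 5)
Your proposal is correct, and it is in fact more complete than the paper's own proof, which disposes of the proposition in two sentences by citing Lemma~\ref{ShareLem} (Consistency) for the contiguity of $T$ and Lemma~\ref{2Lemm} (Symmetry) for conditions (1) and (3), with NVP handling condition (2). Your contiguity argument is essentially the proof of Lemma~\ref{ShareLem}: shared peaks propagate to adjacent alternatives and the shared block sits in the same internal order in every alignment in the support. Where you genuinely go beyond the paper is in the shared-monotonicity step. Lemma~\ref{2Lemm} is stated and proved only as a \emph{necessity} result, and only for profiles whose peaks sit at two adjacent shared alternatives; the paper never explicitly verifies that conditions (1)--(3) deliver $f(\succ,P)=f(\succ',P)$ for an \emph{arbitrary} shared profile $P$. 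Your counting argument supplies exactly this: since every shared preference has its peak in $R$ by definition of $T$, at most $n-1$ tops (all phantoms) lie strictly before or strictly after the block $R$, so the median of the $2n-1$ tops cannot escape $R$; and because the number of phantoms before $R$, the phantom pattern inside $R$, and the internal order of $R$ all coincide across $\succ$ and $\succ'$, the count of tops weakly to the left of each alternative of $R$ is identical under both alignments and the median lands on the same alternative. This is the piece the paper leaves implicit, and making it explicit is the right call. One cosmetic remark: your invocation of "at least one phantom on each extreme end" inside the median-coincidence step is not actually needed there (the buffer argument already works because there are only $n-1$ phantoms in total); condition (2) is needed only for no veto power, which you correctly identify elsewhere.
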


Proposition \ref{SameShare} admits an interesting corollary. Suppose $|supp(\mu)| = 2$. In this case, regardless of which alignments are part of $supp(\mu)$, either the two alignments share a given set of preferences or they do not. Thus, the premise of the proposition is met, and we can identify implementable SCFs. $|supp(\mu)| = 2$ includes practically important situations such as when the designer is uncertain just about the relative alignment of two adjacent alternatives.

Finally, we show that if we impose a fairly weak additional assumption, we again get a very permissive result. Dutta and Sen (2012) \cite{dutta2012nash} introduced the concept of a ``partially honest" agent, who strictly prefers to reveal the true state as long as he is not worse off. 

\begin{definition}
An agent $i$ is partially honest if given that $\hat{\succ}$ is the true underlying alignment, and $m,m^{'}$ are such that $m_{i} = (\_,\hat{\succ},\_)$ and it is not the case that $g(m^{'}) P_i g(m)$, then he strictly prefers to send $m$.
\end{definition}

They show that having just one partially honest agent can make it possible to Nash implement SCFs that satisfy only No Veto Power, doing away with the otherwise necessary Maskin Monotonocity requirement. Unsurprisingly, this assumption is strong in our setting as well - in the presence of just one partially honest agent, we are able to implement a very large class of SCFs, regardless of the belief structure. 

\begin{proposition}\label{PartHon}
Let $|supp(\mu)| \geq 3$. Suppose that at least one agent is partially honest. Then, any NVPMS is implementable.
\end{proposition}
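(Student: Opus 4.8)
The plan is to show that the partial honesty assumption substitutes for shared-monotonicity, which by the Observation on NVPMS together with Theorem \ref{neccsuffthm} is the only property an NVPMS can lack among the sufficient conditions. Indeed, every NVPMS has each $f_\succ$ a generalized median rule, hence strategy-proof by Moulin's theorem, and satisfies no veto power, so the sole outstanding hurdle to implementability is shared-monotonicity. The argument mirrors Dutta and Sen: a single partially honest agent lets one dispense with monotonicity in Nash implementation and get by on no veto power alone, and here shared-monotonicity plays precisely the role their Maskin monotonicity plays for the public (alignment) component.

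First I would exhibit a canonical mechanism of the same flavour as the one behind Theorem \ref{neccsuffthm}: each agent $i$ reports a pair $(\succ_i, P_i) \in \mathcal{K}$, a flag alternative $a_i \in X$, and an integer $z_i$. If at least $n-1$ agents report a common alignment $\succ$ lying in $supp(\mu)$, that $\succ$ is fixed as the public state and the outcome is $f_\succ$ evaluated at the reported peaks, with the peak of any lone dissenter simply located within $\succ$; otherwise the agent announcing the highest integer receives their flag $a_i$. Because each $f_\succ$ is a median rule, the consensus branch depends on the reports only through the peaks --- a feature I will exploit repeatedly.

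Next I would check that truthful play is an equilibrium delivering $f(\theta)$ and implements the private component in dominant strategies: a unilateral misreport of preferences leaves the $(n-1)$-consensus on $\hat\succ$ intact and is unprofitable by strategy-proofness of $f_{\hat\succ}$, and a unilateral change of alignment likewise cannot dislodge the consensus. The heart of the proof is ruling out the remaining equilibria. Profiles with no $(n-1)$-consensus fall into the integer-game branch, where the standard unbounded-bidding argument forces every agent to receive their own peak in any equilibrium, so the peaks coincide and unanimity makes the outcome correct. The dangerous profiles are those in which at least $n-1$ agents agree on a wrong alignment $\succ \neq \hat\succ$ (necessarily one in $supp(\mu)$, since an out-of-support consensus is routed to the integer game), exactly where shared-monotonicity would ordinarily be invoked. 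If all $n$ agents report $\succ$, the partially honest agent switches to reporting $\hat\succ$ while keeping the same peak; since $n-1$ agents still report $\succ$ and median rules see only peaks, the outcome is unchanged, so by partial honesty the agent strictly prefers the honest report and the profile is not an equilibrium. If exactly $n-1$ agents report $\succ$, I would instead argue that any consensus agent not already receiving their true top can profitably move to a third alignment, trigger the integer game, and claim that top; hence in equilibrium all $n-1$ consensus agents share their true top, and no veto power forces the (correct) outcome to equal it.

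The step I expect to be the main obstacle is this last elimination of wrong-alignment equilibria and, in particular, making the two sub-cases mesh. The delicate points are ensuring the partially honest agent's deviation to truth is genuinely outcome-neutral --- which is exactly why the consensus branch must be engineered to depend only on peaks and why a lone dissenter cannot break an $(n-1)$-consensus --- and triggering the integer game uniformly for $n=3$, where a consensus agent must move to a third, distinct alignment so as not to merely flip the consensus onto the dissenter's alignment, and for $n \geq 4$. The hypothesis $|supp(\mu)| \geq 3$ situates the result in the genuinely multi-alignment regime complementary to the reverse-alignment case and the two-alignment corollary to Proposition \ref{SameShare}, and I expect no veto power to carry the single-dissenter configurations; the coupling between the Nash layer on alignments and the dominant-strategy layer on preferences is what will demand the most care to state precisely.
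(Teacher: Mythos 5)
Your proposal is correct and follows essentially the same route as the paper's proof: a consensus-plus-integer-game mechanism in which a unanimous report of a wrong alignment is destroyed by the partially honest agent's outcome-neutral deviation to the truth, dispersed or non-unanimous configurations are eliminated by a deviation to a third alignment that triggers the integer game, and no veto power pins down the outcome when $n-1$ true peaks coincide. The only differences are cosmetic (your integer-game winner receives a flag alternative rather than dictating a median SCF on all reported peaks, and you organize cases by the reported configuration rather than by the true-peak configuration), and your explicit treatment of the $n=3$ consensus-breaking step and the no-consensus branch is, if anything, slightly more complete than the paper's.
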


\section{Discussion}
The primary innovation of this paper is the study of a setting in which the state of the world has public and private components, with the designer being uninformed about both components and agents informed of the public state and their own private state. Settings like this merit practical interest, since not all information available to an agent can be neatly bucketed into either something only privately observed or only publicly observed. The solution concept that we use is relevant for our specific case, but there may be others or generalizations of the same which could lead to interesting results.

We focus on the single peaked domain since our interest is in answering questions related to robustness when the designer is not certain about the underlying alignment of the domain. Our general result on implementation in mixed information equilibria, which involves Nash implementation in the public component and dominant strategy implementation in the private component, identifies necessary and sufficient conditions on SCFs for them to be implementable. This is in and of itself useful in practice, since these conditions can be directly checked and it can be verified whether the SCF can be implemented. 

It is for anonymous and unanimous rules that we go further and identify implementable SCFs. Fixing the public part of the state i.e. the underlying alignment, implementability necessitates that the SCF behave as a Moulin generalized median mechanism. Implementability could also impose further constraints on which median mechanisms these can be, depending on which alignments are possible. For some sets of alignments, there are no additional constraints imposed, while if the set of alignments is the set of all possible alignments, then only one median mechanism is admissible for each alignment. This is how we answer the robustness questions - depending on which alignments the designer thinks are possible, she can identify which SCF she should choose from all of Moulin's median mechanisms.

A more general question to tackle in this setting would be to identify the structure of unanimous implementable rules. As alluded to earlier, it is known that unanimous strategy-proof rules must be ``min-max" rules, and the same questions of robustness to designer uncertainty can be asked about these rules. Here our choice was to investigate the generalized median rules since they have an intuitive visualization and also because anonymity is a fairly standard normative requirement. The literature post \cite{Moulin1980} has also focused a great deal on median rules for similar reasons.

\newpage

\bibliography{SingPeak}

\begin{appendices}

\section{Proofs}

\begin{proof}[\textbf{Proof of Theorem \ref{neccsuffthm}}]
Let $f$ be an implementable SCF, and let $(M,g)$ be the mechanism that implements it. 

Let $\succ$ be an alignment in the support of $\mu$. Fix an agent $i$. Let $P_i$, $P_{i}^{'}$ and $P_{-i}^{'}$ be arbitrary preferences single peaked according to $\succ$. Then, there exists an $m^*$ such that $m^{*}_{i} = (l^{*}_i, \succ, P_{i})$, $m^{*}_{-i} = (l^{*}_{-i}, \succ, P^{'}_{-i})$, and $g(m*) = f(\succ, P_i, P_{-i}^{'})$. Also, there exists an $m^{**}$ such that $m^{**}_{i} = (l^{**}_i, \succ, P_{i}^{'})$, $m^{**}_{-i} = (l^{**}_{-i}, \succ, P^{'}_{-i})$, and $g(m^{**}) = f(\succ, P_{i}^{'}, P_{-i}^{'})$. By dominant strategy implementation in private information, $g(m^{*}_{i}, (l^{*}_{-i}, \succ, P_{-i}^{'})) \ R_i \ g((l^{**}_{i}, \succ, P_{i}^{'}), (l^{**}_{-i}, \succ, P_{-i}^{'}))$, which implies $g(m^*) R_i g(m^{**})$, and hence we have $f(\succ, P_i, P_{-i}^{'}) R_i f(\succ, P_{i}^{'}, P_{-i}^{'})$. Since $P_{i}^{'}$ and $P_{-i}^{'}$ were arbitrary, this means that $f_{\succ}$ is strategy-proof.

Let $(\succ, P)$ and $(\succ^{'}, P)$ be two states that share preferences. Let $m_{i}^* = (l^{*}_i, \succ, P_{i})$ be the implementing message in state $(\succ, P)$ and $m^{**}$ in $(\succ^{'}, P)$. We must have that deviating with respect to the underlying alignment leads to some outcome in the lower contour set for each agent $i$. Then, at state $(\succ^{'}, P)$ as well deviations from $m^*$ will lead to some outcome in the lower contour set for each agent $i$, and hence $m^*$ will be a Nash equilibrium at $(\succ^{'}, P)$. By Nash implementation in public information, $f(\succ, P) = g(m^*) = g(m^{**}) = f(\succ^{'}, P)$.

For the other direction, let $f$ be such that $f_{\succ}$ is strategy-proof for all $\succ$ in the support of $\mu$, $f$ is shared-monotonic, and $f$ satisfies no veto power. Consider the following mechanism:
\begin{itemize}
    \item $M_i = N \times X \times \mathcal{K}$ for all $i$, where $N$ is the set of natural numbers.
    \item if messages are of the form $m_i = (l_i, x_i, \succ, P_i)$ for all except at most one agent i.e. at least $n-1$ agents agree on the alignment, $g(m) = f(\succ, P)$.
    \item if not, select the agent with the lowest index amongst those sending the highest natural number, say $j$. $g(m) = x_j$.
\end{itemize}
To verify dominant strategy implementation in private information, note that given $\succ$, truth-telling is a dominant strategy since $f_{\succ}$ is strategy proof. Unilateral deviations along $N \times X$ do not change the outcome. To verify Nash implementation in public information, note first that unilateral deviations for the alignment do not change the outcome. Second, if all but one agent $i$ report the same alignment but $i$ reports a different alignment, we can have a Nash equilibrium in the public information only if the outcome chosen is the best alternative for all agents but $i$. But then, by NVP, this outcome coincides with $f(\succ, P)$. Finally, if all agents report a different alignment, by shared-monotonicity the outcome is the same as $f(\succ, P)$. 
\end{proof}



\begin{lemma}\label{ShareLem}[Consistency]
Suppose $\succ$ and $\succ^{'}$ are such that $\mathcal{D}_{\succ} \cap \mathcal{D}_{\succ^{'}} \neq \emptyset$. Let $T_{\succ, \succ^{'}} = \{x \in X \ | \ x = P(1) $ for some $ P \in \mathcal{D}_{\succ} \cap \mathcal{D}_{\succ^{'}} \}$. Then the following hold
\begin{enumerate}
    \item $|T_{\succ, \succ^{'}}| \geq 2$.
    \item Let $T_{\succ, \succ^{'}} = \{x_1, x_2,...,x_{|T|}\}$. Then $x_1, x_2,...,x_{|T_{\succ, \succ^{'}}|}$ are adjacent in both $\succ$ and $\succ^{'}$, and they are in the same order in both $\succ$ and $\succ^{'}$ \footnote{We reiterate that an alignment and its exact reverse are equivalent for our purposes, so this statement and the rest in this paper should be considered up to these equivalence classes.}.
\end{enumerate}
\end{lemma}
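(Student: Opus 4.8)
The plan is to work throughout with the standard interval reformulation of single-peakedness: a preference $P$ lies in $\mathcal{D}_{\succ}$ if and only if every upper-contour set $\{P(1),\dots,P(r)\}$ is an interval (a contiguous block) of $\succ$. Consequently $P \in \mathcal{D}_{\succ}\cap\mathcal{D}_{\succ'}$ exactly when every such top set is simultaneously an interval of $\succ$ and of $\succ'$; I will call these \emph{common intervals}, and use the elementary facts that they are closed under intersection and under union of overlapping members, and that each top set of such a $P$ arises from the previous one by adjoining a single element at an endpoint in both orders at once. Part 1 is then immediate: starting from any witness $P$ (which exists by hypothesis) with peak $a=P(1)$ and second alternative $b=P(2)$, the set $\{a,b\}$ is a common interval, so swapping the top two entries of $P$ yields $Q$ with $Q(1)=b$, $Q(2)=a$ and $Q(r)=P(r)$ for $r\ge 3$; every top set of $Q$ equals a top set of $P$, so $Q$ is again in the intersection and $b\in T_{\succ,\succ'}$, giving $|T_{\succ,\succ'}|\ge 2$.

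For the ordering half of part 2 I would first show that $\succ$ and $\succ'$ induce the same betweenness relation on $T_{\succ,\succ'}$: for $x,y,z\in T_{\succ,\succ'}$, $y$ is $\succ$-between $x$ and $z$ iff it is $\succ'$-between them. The key device is a contradiction. Suppose $y$ is $\succ$-between $x$ and $z$ but, say, $x$ is $\succ'$-between $y$ and $z$; take the witness $P^{z}$ with peak $z$. The first top set of $P^z$ containing $x$ is a $\succ$-interval through $z$, hence contains the $\succ$-between element $y$, forcing $y\mathrel{P^z}x$; but the first top set containing $y$ is a $\succ'$-interval through $z$, hence contains the $\succ'$-between element $x$, forcing $x\mathrel{P^z}y$ — impossible. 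The remaining case (with $z$ the $\succ'$-middle) is dismissed identically using $P^{x}$. Since a betweenness relation determines a linear order up to reversal, the restrictions $\succ\!\mid_{T}$ and $\succ'\!\mid_{T}$ coincide up to reversal, which is the ``same order'' assertion.

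For contiguity I would induct on $|X|$. Fix a witness $P$; its lowest-ranked alternative $w=P(|X|)$ is a $\succ$- and $\succ'$-extreme point, and $X':=X\setminus\{w\}$ is a common interval on which the restricted intersection domain is nonempty, with its own peak set $T'$. By chain surgery — deleting $w$ from, or appending $w$ to, a maximal chain of common intervals, using that $w$ is an endpoint of every common interval containing it — I would establish $T_{\succ,\succ'}\cap X'=T'$. The inductive hypothesis makes $T'$ contiguous in both restricted orders. If $w\notin T_{\succ,\succ'}$ then $T_{\succ,\succ'}=T'\subseteq X'$ and contiguity is inherited. If $w\in T_{\succ,\succ'}$, the swap argument applied to a peak-$w$ witness shows that $w$'s unique neighbor $w^{-}$ (the same in both orders, since $\{w,w^{-}\}$ must be a common interval) lies in $T'$ and is the extreme element of $X'$ on $w$'s side in each order; hence $T'$ abuts $w$ and $T_{\succ,\succ'}=T'\cup\{w\}$ stays contiguous in both. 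Combined with the ordering result, this yields that $T_{\succ,\succ'}$ is a contiguous subordering occurring in the same order (up to reversal) in $\succ$ and $\succ'$.

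The main obstacle is the contiguity step, because it is exactly here that the two orders can locally disagree about ``left'' versus ``right'': an element adjoined at the right end of a top set in $\succ$ may be adjoined at the left end in $\succ'$, so no global orientation can be fixed a priori. The induction sidesteps this by peeling off a genuinely extreme alternative $w$ and tracking only the single boundary alternative $w^{-}$, whose neighbor relation to $w$ is forced to be common; reconciling these boundary orientations across the two orders is the delicate part, whereas the betweenness contradiction disposes of the ordering claim cleanly.
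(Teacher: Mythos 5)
Your proposal is correct. Part 1 and the ordering half of Part 2 follow essentially the paper's own route: the paper also obtains $|T_{\succ,\succ'}|\ge 2$ by swapping the top two entries of a witness preference, and also proves order preservation by taking a shared preference peaked at an element of $T_{\succ,\succ'}$ and deriving contradictory pairwise rankings from single-peakedness with respect to the two alignments; your betweenness formulation via upper contour sets being common intervals is a cleaner packaging of the same contradiction. Where you genuinely diverge is contiguity. The paper argues directly: having shown that the alternatives between the two $\succ$-extremes of $T_{\succ,\succ'}$ appear between them in the same order under $\succ'$, it iterates the top-two swap starting from a witness peaked at an extreme of $T_{\succ,\succ'}$, promoting the peak one adjacent alternative at a time and concluding that every intermediate alternative is a shared peak. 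You instead induct on $|X|$, peeling off the bottom alternative $w$ of a witness (which is forced to be an endpoint of both alignments), establishing $T_{\succ,\succ'}\cap X'=T'$ for the restricted problem, and gluing $w$ back on via its common neighbor $w^-$. Both work. The paper's version is shorter but leans implicitly on the prior order-preservation step to guarantee that the promoted alternative is adjacent to the current peak in both alignments (and its displayed construction of $P''$ is informal on exactly this point); your induction avoids fixing any global orientation and localizes all the difficulty to a single boundary element, at the cost of the extra restriction lemma and the inductive scaffolding. When writing it up, spell out the base case and both inclusions of $T_{\succ,\succ'}\cap X'=T'$ (restricting a shared preference to $X'$, and appending $w$ at the bottom of a shared preference on $X'$) --- the phrase ``chain surgery'' is carrying real weight there.
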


\begin{proof}[\textbf{Proof of Lemma \ref{ShareLem}}]
\begin{enumerate}
    \item Note that $T_{\succ, \succ^{'}} \neq \emptyset$. Let $x \in T_{\succ, \succ^{'}}$. Then, $\exists \ P \in \mathcal{D}_{\succ} \cap \mathcal{D}_{\succ^{'}}$ such that $x = P(1)$. Let $y = P(2)$. Consider a preference $P^{'}$ such that $P^{'}(1) = y$ and $P^{'}(2) = x$ and $P^{'}(i) = P(i) \ \forall \ i > 2$. Since this change continues to make the new ordering single-peaked with respect to whatever $P$ was single-peaked according to, we must have $P^{'} \in \mathcal{D}_{\succ} \cap \mathcal{D}_{\succ^{'}}$. Since this implies $y \in T_{\succ, \succ^{'}}$, we are done.
    \item Let $x, y \in T_{\succ, \succ^{'}}$ such that in $\succ$, $t$ lies between $x$ and $y$ for all $t \in T_{\succ, \succ^{'}}$. Now, let $z \in X$ be in between $x$ and $y$ in $\succ$. Let us check where $z$ will be in $\succ^{'}$. Suppose it is not in between the two. Without loss of generality, let it be to the left of both $x$ and $y$ in $\succ^{'}$. Since $y \in T_{\succ, \succ^{'}}$, $\exists \ P \in \mathcal{D}_{\succ} \cap \mathcal{D}_{\succ^{'}}$ such that $y = P(1)$. By single peakedness, $zPx$ in $\succ$ but $xPz$ in $\succ^{'}$, which is a contradiction. Thus, $z$ must be in between $x$ and $y$ in $\succ^{'}$ as well. \\
    It is straightforward to show by a similar argument that if both $z_1$ and $z_2$ are in between $x$ and $y$ in $\succ$, then the four must be in the same order in $\succ^{'}$ as well. Thus, the order between $x$ and $y$ is preserved across $\succ$ and $\succ^{'}$. \\
    Now, let $z$ be next to $x$ in $\succ$ and between $x$ and $y$. Since $x \in T_{\succ, \succ^{'}}$, $\exists \ P^{'} \in \mathcal{D}_{\succ} \cap \mathcal{D}_{\succ^{'}}$ such that $x = P^{'}(1)$. Consider a preference $P^{''}$ such that $P^{''}(1) = z$ and $P^{''}(2) = x$ and $P^{''}(i) = P^{'}(i) \ \forall \ i > 2$. We must have $P^{''} \in \mathcal{D}_{\succ} \cap \mathcal{D}_{\succ^{'}}$. In this manner, $z \in T_{\succ, \succ^{'}}$ for all $z$ in between $x$ and $y$, which proves our hypothesis. 
\end{enumerate}
\end{proof}

\begin{lemma}\label{2Lemm}[Symmetry]
Suppose $\succ$ and $\succ^{'}$ are such that $\mathcal{D}_{\succ} \cap \mathcal{D}_{\succ^{'}} \neq \emptyset$. Let $x,y \in T_{\succ, \succ^{'}}$ be such that $x$ and $y$ are adjacent to each other, with $x$ before $y$. Then, $f_{\succ}$ and $f_{\succ^{'}}$ must have the same number of phantoms with tops on or before $x$.
\end{lemma}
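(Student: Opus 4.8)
The plan is to reduce the entire question to the behavior of the two median rules on a single family of ``two-point'' profiles, and then let shared-monotonicity do the work. Write $a$ for the number of phantoms of $f_{\succ}$ with top on or before $x$, and $a'$ for the corresponding count for $f_{\succ'}$; the goal is $a = a'$. First I would record a consequence of Lemma \ref{ShareLem}: since $x$ and $y$ are adjacent \emph{within} $T_{\succ,\succ'}$ and $T_{\succ,\succ'}$ is a contiguous block in each alignment, $x$ and $y$ are in fact adjacent in the full orderings $\succ$ and $\succ'$, so no alternative lies strictly between them in either. Consequently every phantom of $f_{\succ}$ sits either on/before $x$ or on/after $y$, and likewise for $f_{\succ'}$; this is what makes the two counts $a,a'$ unambiguous even though $\succ$ and $\succ'$ may differ outside $T_{\succ,\succ'}$. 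Throughout I use that $f$ is an implementable anonymous and unanimous SCF, so by Observation 1 each $f_{\succ}$ is a median rule and, by Theorem \ref{neccsuffthm}, $f$ is shared-monotonic.

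Next I would compute what a generalized median rule does on a profile whose peaks lie only in $\{x,y\}$. Suppose $j$ agents peak at $y$ and $n-j$ peak at $x$. Then the relevant multiset of $2n-1$ tops consists of $n-j$ copies of $x$, $j$ copies of $y$, the $a$ phantoms on/before $x$, and the $(n-1)-a$ phantoms on/after $y$. Since the outcome is the $n$-th smallest of these, a direct count of how many values fall on each side of the $x/y$ boundary shows that the outcome equals $x$ when $j \le a$ and equals $y$ when $j \ge a+1$. In particular the rule ``flips'' from $x$ to $y$ precisely as $j$ passes from $a$ to $a+1$, so the single number $a$ is completely pinned down by the rule's outputs on this family; the identical statement holds for $f_{\succ'}$ with $a'$.

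Finally I would realize these profiles inside the common domain. Because $x,y \in T_{\succ,\succ'}$, each agent can be assigned a preference in $\mathcal{D}_{\succ}\cap\mathcal{D}_{\succ'}$ with the desired peak, producing, for each $j$, a profile $P^{(j)}$ that is single-peaked according to both alignments. As a generalized median rule depends only on the peaks, $f_{\succ}(P^{(j)})$ and $f_{\succ'}(P^{(j)})$ are exactly the outcomes computed above, and shared-monotonicity forces $f_{\succ}(P^{(j)}) = f_{\succ'}(P^{(j)})$ for every $j$. Since $a$ and $a'$ are each determined by the common location of the flip from $x$ to $y$, this yields $a = a'$, which is the claim.

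I expect the only delicate point to be bookkeeping rather than depth: one must verify that ``on or before $x$'' is the correct threshold (equivalently ``strictly before $y$''), handle the $n$-th-smallest count carefully at the boundary cases $j=a$ and $j=a+1$ (including phantoms that sit strictly before $x$ or strictly after $y$), and be explicit that the two counts are taken with respect to two \emph{different} orderings yet still coincide. The conceptual content---that peaks confined to $T_{\succ,\succ'}$ yield profiles in the shared domain, on which the two median rules must agree---is what carries the argument.
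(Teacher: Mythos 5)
Your proposal is correct and follows essentially the same route as the paper's proof: both consider profiles in which $k$ agents report a shared preference peaked at $x$ and $n-k$ report one peaked at $y$, observe these lie in $\mathcal{D}_{\succ}\cap\mathcal{D}_{\succ'}$ so shared-monotonicity forces equal outcomes, and note that the location of the flip from $x$ to $y$ pins down the phantom count on or before $x$. Your version merely spells out the median bookkeeping that the paper leaves implicit.
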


\begin{proof}[\textbf{Proof of Lemma \ref{2Lemm}}]
Let $P_x \in \mathcal{D}_{\succ} \cap \mathcal{D}_{\succ^{'}}$ denote the preference with $x$ at top and $P_y \in \mathcal{D}_{\succ} \cap \mathcal{D}_{\succ^{'}}$ denote the preference with $y$ at top. For any $k,l \in \{0,1,...,n\}$ such that $k+l=n$, we must have that reports of $k$ number of $P_x$ and $l$ number of $P_y$ preferences must lead to the same outcome across $\succ$ and $\succ^{'}$. If the number of phantoms before $x$ were not the same across the two alignments, this would not be possible.
\end{proof}

\begin{proof}[\textbf{Proof of Theorem \ref{FullSupp}}]
Since Lemma \ref{2Lemm} establishes that Symmetry is necessary, we first show that everything but the order statistic mechanisms with odd agents will violate Symmetry and, for $|A| > 3$, we show that all except the true median will violate it. Then we show that for these surviving mechanisms, NVP is necessary. Showing that the surviving mechanisms satisfy Symmetry and NVP completes the proof.

Our first claim is that phantoms cannot be placed on the interior for any alignment, where interior of an alignment is anything not at the two ends. Suppose the median SCF chosen for some alignment, say $\succ$, contains phantoms in the interior. Select one alternative in the interior which has a phantom on it, and call it $y$. Let the alternative to the left of it be $x$ and the one to its right be $z$. Suppose there are $k$ phantoms on or before $x$, and $l$ phantoms on or before $y$. By assumption, $k < l$. Consider the alignment, say $\succ^{'}$, with $z$ pulled to the left of $x$, but everything else in the same position. Since we have full support, this alignment is in the support of $\mu$. Clearly, $\succ$ and $\succ^{'}$ share preferences with tops at $x$ and $y$. Thus, there must continue to be $k$ phantoms before $x$ in $\succ^{'}$. Consider the alignment, say $\succ^{''}$, with $y$ and $z$ interchanged in $\succ$, with all other alternatives at the same position. $\succ$ and $\succ^{''}$ share preferences with tops at $y$ and $z$, so after $y$ in $\succ^{''}$, we must have $l$ phantoms. Now, note that $\succ^{'}$ and $\succ^{''}$ share preferences with tops at $x$ and $z$. Before $x$ in $\succ^{'}$, we have $k$ phantoms but after $x$ in $\succ^{''}$ we have at least $l$ (since $y$ is to the right of $x$), which is a contradiction.

Second, we claim that the number of phantoms must be the same on both ends for all alignments when $|A| > 3$. Suppose the median SCF chosen for some alignment, say $\succ$, contains $k$ phantoms at the left end. Let $a$, $b$, $c$, $d$ be the first four alternatives in $\succ$, in that order. Consider the alignment, say $\succ^{'}$, with $a$, $d$, $b$, $c$ as the first four alternatives, in that order, and the rest the same as $\succ$. Clearly, $\succ$ and $\succ^{'}$ share preferences with tops at $b$ and $c$, so $\succ^{'}$ must have $k$ phantoms on $a$.  Consider the alignment, say $\succ^{''}$, with $c$, $b$, $a$, $d$ as the first four alternatives, in that order, and the rest the same as $\succ$. Clearly, $\succ^{'}$ and $\succ^{''}$ share preferences with tops at $a$ and $d$, so $\succ^{''}$ must have $k$ phantoms on $c$. Now, note that $\succ^{'}$ and $\succ^{''}$ share preferences with tops at $a$ and $b$, so in $\succ$, there must be $k$ phantoms on the right end. Thus, we have ruled out everything but the true-median. 

Now, we note that No Veto Power (NVP) is also a necessary condition for the surviving mechanisms under full support. In the class of order statistics mechanisms, a mechanism does not satisfy NVP if for some $\succ$, the SCF chosen for $\succ$ has all phantoms placed on one or the other side. Suppose such a $\succ$ exists and without loss of generality, let all phantoms be on the left end. Let $a,b,c$ be the first three alternatives in $\succ$. Consider a profile of preferences with $n-1$ peaks at $a$ and $b$, and $1$ peak at $c$ (call this agent $i$). Consider a $\succ^{'}$ where $c,a,b$ are the first three alternatives and everything else is the same as $\succ$. By Symmetry, $\succ^{'}$ must also have all the peaks at $c$. Then, by misreporting to a preference $\succ^{'}$ with top at $c$, agent $i$ can profitably deviate.

We close by noting that the true median and the symmetric order statistic mechanisms satisfy both NVP and Symmetry. Consider the true median mechanism. For any pair of alternatives, Symmetry is satisfied by definition since the number of phantoms is the same on both ends of every alignment. For any profile of preferences with $n-1$ them on the same alternative, this alternative must be chosen since there will be at least one phantom on each end. Consider the symmetric order statistic mechanism. For any pair $a,b$, they are shared across exactly two alignments, and in the same order. Thus, symmetry is guaranteed. NVP follows by the fact that they have at least one phantom on either side.
\end{proof}

\begin{proof}[\textbf{Proof of Proposition \ref{SameShare}}]
This follows directly from Lemmas \ref{ShareLem} and \ref{2Lemm}. Note that Consistency implies that the shared tops must be a contiguous subordering of each alignment in $S$, and Symmetry then implies parts $1.$ and $3.$ of the proposition. $2.$ is a consequence of NVP, as noted earlier.
\end{proof}

\begin{proof}[\textbf{Proof of Proposition \ref{PartHon}}]
Consider the following indirect mechanism $\mathcal{M}$: each agent sends a message $m_i = (\succ_i, P_i, f_i, z)$ such that $\succ_i \in \mathcal{P}$, $P_i \in \mathcal{D}_{\succ_i}$, $f_i$ is a median SCF, and $z$ is an integer. Let $x_i$ be the top of $P_i$.
\begin{enumerate}
    \item If at least $N-1$ agents send $\succ_i = \succ^{*}$, run any median mechanism $\mathcal{M}^c$ satisfying No Veto Power (NVP) using $\succ^{*}$ as the underlying alignment and $x_i$ as the peaks, and
    \item otherwise, choose the agent with the lowest index agent, say $j$, amongst those sending the largest integer in their message, and run the median mechanism $\mathcal{M}^j$ dictated by $f_j$ and $\succ_j$, using $x_i$ as the peaks.
\end{enumerate} 

Suppose all agents send $\hat{\succ}$ as part of their message. Then, $\mathcal{M}^c$ is run using the correct alignment and all the equilibria are the same as a strategy-proof, unanimous, and anonymous direct mechanism. Deviations in reporting of $\succ_{i}$ do not lead to a change in the outcome, and hence they cannot be profitable.

Suppose all agents send $\succ^{'} \neq \hat{\succ}$ as part of their message. Since there exists an agent with preference for honesty, this agent will deviate to sending $\hat{\succ}$ since that doesn't change the outcome, and hence this cannot be an equilibrium message profile.

Suppose there are two alignments $\succ^{'}$ and $\succ^{''}$ sent by the agents. Consider two cases:

\begin{adjustwidth}{1 cm}{}
\textit{Case 1}: The true peaks of all but one agents are at the same alternative, say $a$. If at least one of the agents with the true peak at $a$ sends an alignment $\succ^{'}$ while the others send $\succ^{''} \neq \succ^{'}$, then the agent with the true peak not at $a$ can deviate to some $\succ^{'''} \notin \{\succ^{'}, \succ^{''}\}$ and get his true peak selected. 
\\
Suppose instead that all the agents with the true peak at $a$ send the same $\succ^{'}$ and the remaining agent sends $\succ^{''} \neq \succ^{'}$. Note that for such a message to be a Nash equilibrium, it must be that $a$ is the outcome, since if some $b \neq a$ is chosen, then, one of the agents with peak at $a$ can deviate to report some $\succ^{'''} \notin \{\succ^{'}, \succ^{''}\}$ and get $a$ chosen, which is a profitable deviation. By NVP, the alternative chosen in the dominant strategy equilibrium of $\mathcal{M}^c$ under knowledge of $\succ$ is $a$ as well.
\\
\textit{Case 2}: If the true peaks of fewer than $N-1$ agents are the same, then there will be at least two agents who do not get their preferred alternative. At least one of these agents can profitably deviate by sending a larger integer and a different alignment $\succ^{'''}$ (along with the appropriate $f_j$) to get his preferred alternative given the messages of others.\end{adjustwidth}
\end{proof}

\end{appendices}

\end{spacing}
\end{document}